\newtheorem{teo}{Theorem}[section]%definizioni che danno luogo ai nomi di teoremi, proposizioni, ecc.
\newtheorem{prop}[teo]{Proposition} %la quadra [teo] indica che la numerazione � subordinata a quella dei teoremi
\newtheorem{lem}[teo]{Lemma}
\theoremstyle{definition}
\newtheorem{defin}[teo]{Definition}
\newtheorem{rem}[teo]{Remark}
\newtheorem{prob}[teo]{Problem}
\newcommand{\Z}{\mathbb{Z}}
\newcommand{\N}{\mathbb{N}}
\newcommand{\R}{\mathbb{R}}
\newcommand{\ra}{\rightarrow}
\newcommand{\Al}{\mathcal {A}_l}
\title{Optimal strategies for a game on amenable semigroups}
\author{Valerio CAPRARO\thanks{University of Neuchatel, Switzerland. Email: valerio.capraro@unine.ch. This research has been partially supported by Swiss SNF Sinergia project CRSI22-130435.}
\\ Kent E. MORRISON\thanks{California Polytechnic State University, San Luis Obispo, CA; American Institute of Mathematics, Palo Alto, CA. Email: kmorriso@calpoly.edu}}
\begin{document}

\maketitle

\setlength{\parskip}{1ex plus 0.5ex minus 0.2ex}

\begin{abstract}
The semigroup game is a two-person zero-sum game defined on a semigroup $(S,\cdot)$ as follows: Players 1 and 2 choose elements $x\in S$ and $y\in S$, respectively, and player 1 receives a payoff $f(x y)$ defined by a function $f: S \rightarrow [-1,1]$. If the semigroup is amenable in the sense of Day and von Neumann, one can extend the set of classical strategies, namely countably additive probability measures on $S$, to include some finitely additive measures in a natural way. This extended game has a value and the players have optimal strategies. This theorem extends previous results for the multiplication game on a compact group or on the positive integers with a specific payoff. We also prove that the procedure of extending the set of allowed strategies preserves classical solutions: if a semigroup game has a classical solution, this solution solves also the extended game.
\end{abstract}

{\em Keywords:} amenability, multiplicative game, loaded game,
optimal strategies, minimax strategy, Nash equilibrium,
intrinsically measurable \pagebreak \tableofcontents

\section{Introduction}

The \emph{multiplication game} is a two-person zero-sum game in
which the players independently choose positive numbers and multiply
them together. The first player wins when the first digit of the
product is 1, 2, or 3, and the second player wins otherwise.
Different versions of the game arise according to which numbers the
players are allowed to choose. In the original game invented by
Ravikumar \cite{Ra} the players choose from the set of $n$-digit
integers with a fixed $n$. He analyzed the optimal strategies in the
limit as $n \rightarrow \infty$. In \cite{Mo} the second author
showed that when the numbers are positive real numbers, an optimal
strategy for both players is to choose numbers from the Benford
distribution, which is the limit that Ravikumar found in his
analysis.

Also in \cite{Mo} it was shown that the procedure applies naturally
to compact groups. For the \emph{group game} let $G$ be a compact
group and let $W$ be a subset of $G$ that is measurable with
respect to the Haar measure.\footnote{Recall that the Haar measure is the
unique invariant probability measure on a compact group.} The
players choose elements of $G$ and the payoff function of player 1
is defined by
$$
f(x,y)=\left\{
         \begin{array}{ll}
           1, & \hbox{if $xy\in W$} \\
           0, & \hbox{if $xy\notin W$}
         \end{array}
       \right.
$$
The pure strategies are the elements of $G$, which are identified
with the point masses, and mixed strategies are Borel probability
measures on $G$. It turns out that that the Haar measure $\lambda$ is an
optimal strategy for both players, and the value of the game, i.e
the probability that player 1 wins when both players play their own
optimal strategy, is $\lambda(W)$.

Still unsolved are the very natural games using the non-compact group
$(\mathbb Z,+)$ and the non-compact semigroup $(\mathbb N,\cdot)$.
The aim of this article is to extend the result on compact groups to
a larger class of algebraic objects that contains both $(\mathbb
Z,+)$ and $(\mathbb N,\cdot)$. Since neither the existence of
inverses nor an identity element is necessary for playing the
multiplication game, it appears that semigroups are the appropriate
setting for this generalization, which we call the \emph{semigroup
game}. In order to prove the existence of optimal strategies and the
existence of a value for the semigroup game, we restrict our attention to
the class of \emph{amenable} semigroups, but we are forced to
enlarge the set of mixed strategies to include finitely additive
probability measures on $S$. Since no countable group has an invariant countably additive probability measure\footnote{Indeed, the group structure and invariance imply that all singletons have the same measure. }, the result in
\cite{Mo} does not apply to them, but there are countable
groups---for example, abelian groups---for which there are optimal
finitely additive strategies.

Then with the proper interpretation of what a mixed strategy is, we
are able to prove that the game on amenable semigroups has
optimal strategies. This includes many interesting non-compact
groups as well, groups such as the additive group of the integers.

It is worth noting that the second author showed that for the
game on the semigroup $(\mathbb N,\cdot)$ of positive
integers with the multiplication, and with $W=\{\text{integers with
first digit 1 through 3}\}$, there are mixed strategies that are
nearly optimal. Doing this made use of the special structure of $W$
and approximating the game with one on a compact group. This
approach is unfortunately inapplicable to general winning sets.

The first author is grateful to Silvia Ghinassi, Daniel Litt, Vern Paulsen
and Florin R\u adulescu for helpful discussions and to Alain Valette for reading and commenting on an early draft. The authors would like to thank Ted Hill for comments and suggestions. Finally, special thanks go to Marco Dall'Aglio for bringing about the collaboration of the coauthors.

\section{Optimal mixed strategies for the semigroup game}

Although the semigroup game can be defined as long as there is a
binary operation on the set of pure strategies, it seems that
associativity is necessary for our results (see also Remark
\ref{associative}). Recall that a \emph{semigroup} $S$ is a set
equipped with an associative binary operation $S\times S\rightarrow
S$. Given $x,y\in S$, the result of the operation is denoted by
$xy$.

Let $S$ be a semigroup and $f:S\rightarrow [-1,1]$ a function. The semigroup
game $\mathcal G(S,f)$ associated to $S$ and $f$
%, denoted by $\mathcal G(S,f)$,
is the
two-person zero-sum game with $S$ the set of pure strategies for
both players. The payoff function of player 1, which is the negative of the payoff function of player 2, is $f(xy)$. The set of mixed strategies will be specified later.

An interesting case is already when $f$ is the characteristic function of a subset $W$ of $S$. In this case the set $W$ is called \emph{the winning set}.

We first consider the case of a countable semigroup $S$.  (There is a technical
reason to separate the countable case from the
uncountable case.)   Let $L^\infty(S)$ denote the Banach space of
all bounded functions from $S$ to $\mathbb R$, equipped with the supremum norm, and recall
that a \emph{mean} on $S$ is a linear functional
$m:L^\infty(S)\rightarrow\mathbb R$ which is positive, in the sense
that $f\geq0$ implies $m(f)\geq0$, and such that $m(1)=1$. A mean is necessarily bounded and has norm 1. A mean
induces a finitely additive measure on $S$ by defining the measure
of $A\subseteq S$ as $m(\chi_A)$, where $\chi_A$ is the
characteristic function of $A$; we also write
$m(f)=\int_Sf(x)dm(x)$.

Classically, the mixed strategies are probability measures
(countably additive) on the set of pure strategies, with the pure
strategies identified with the point masses $\delta_s$, but for the
semigroup game we expand the mixed strategies to include means,
i.e., finitely additive measures, on $S$. Interest in finitely additive measures has increased in recent years as it has been realized that
\emph{besides technical convenience, there are no
conceptual reasons that support the use of that stronger
assumption} \cite{Ma}. But even early in the development of the rigorous theory of probability it was noted by Kolmogorov \cite[p. 15]{Ko} that \emph{``...in describing any observable random process we can obtain only finite fields of probability. Infinite fields of probability occur only as idealized models of real random processes.''}

It is important to recall that by allowing finitely additive
measures as strategies, the results can be quite
different, as shown by the following classical example.

Consider Wald's game \emph{pick the bigger integer}
%\footnote{Wald's
%game is not a semigroup game, but the first author has shown that
%it can be expressed as a game on a set with a non-associative binary operation. This allowed him to give
%an explanation of the strange behavior of this game and to propose a
%solution\cite{Ca}.}
: the set of pure strategies is the set of
non-negative integers and the payoff function of player 1 (which is
the negative of the payoff function of player 2) is

$$
f(s,t)=\left\{
         \begin{array}{rl}
           1, & \hbox{if $s>t$} \\
           0, & \hbox{if $s=t$} \\
           -1, & \hbox{if $s<t$}
         \end{array}
       \right.
$$

Wald\cite{Wa} observed that this game has no value if just countably
additive strategies are allowed. In \cite{He-Su} it is shown
that this game has a value if one allows finitely additive
probability measures as strategies, but the value depends on the
order of integration in such a way that the \emph{internal} player
has an advantage\footnote{See \cite{Sc-Se} for more general results
and relation with other phenomena, as de Finetti's
non-conglomerability.}. This seems very strange, since a game
that is naturally symmetric---as presented---becomes asymmetric.
An explanation of this fact will be given in a followup to this
paper \cite{Ca}, where the author shows that Wald's game is equivalent to a semigroup game which is \emph{loadable} (to be defined in the last section) in infinitely many different ways.

For now let us just say that our interpretation of this fact is that
countably additive strategies are \emph{very few} and, on the other
hand, finitely additive strategies are \emph{too many}. This
suggests that the right formulation of the problem should be
somewhere in the middle; i.e., there should be some restrictions on
the set of allowed strategies which lead to a solution of the
problem. So we are now going to propose a natural way to make these
restrictions.

Consider the definition of the payoff to player 1 when he uses the
mixed strategy $p$ and player 2 uses $q$. Assuming that $p$ and $q$
are countably additive probability measures, the payoff $\pi(p,q)$
is the integral of $f(xy)$ with respect to the product
measure $p \times q$ on $S \times S$. By Fubini's Theorem
\[  \pi(p,q)=\int_y\int_xf(xy)\,dp(x)dq(y)=\int_x\int_yf(xy)\,dq(y)dp(x) .\]

But if $p$ and $q$ are only finitely additive, then \emph{the} product measure is not uniquely defined, the
generalization of Fubini's Theorem is not true and the order of
integration matters. If one of the orders of integration is used for
the payoff definition, then there are symmetric games that lose
their symmetry. For example, if the players of Wald's game use
finitely additive mixed strategies, then defining the payoff to be
$\int_y\int_xf(xy)\,dp(x)dq(y)$ gives player 1 the advantage,
whereas changing the order of integration favors player 2
\cite{Sc-Se}.

\begin{defin}
Let $S$ be a countable semigroup. A mean $m$ is \textbf{left-invariant} if
\[  m(f \circ L_s)=m(f) \quad \forall f\in L^\infty(S),s\in S \]
and \textbf{right-invariant} if
\[  m(f \circ R_s)=m(f) \quad \forall f\in L^\infty(S),s\in S \]
where $R_s(x)=xs$ is the right action of $S$ on itself
and $L_s(x)=sx$ is the left action. A mean is \textbf{invariant} if it is both left-invariant and right-invariant.
\end{defin}

Groups or semigroups with invariant means are called
\emph{amenable} and they form an important class of algebraic
objects. The concept of an amenable group was first introduced by J.
von Neumann\cite{vN} and later it was generalized to semigroups
\cite{Da}. An example of an amenable semigroup is the multiplicative
semigroup of natural numbers \cite{Ar-Wi}. Furthermore, not every
group or semigroup is amenable, with the best known example of a
non-amenable group being the free group on two generators\footnote{The free group on two generators, say $x$ and $y$, is the group of all words in the letters $x,x^{-1},y,y^{-1}$, equipped with the operation of concatenation of words, where only the simplifications $xx^{-1}=x^{-1}x=yy^{-1}=y^{-1}y=e$ are allowed, $e$ being the empty word. It was observed by von Neumann that this group, denoted by $\mathbb{F}_2$, is not amenable. A celebrated example of Ol'shanskii shows the existence of non-amenable groups which do not contain $\mathbb{F}_2$ (see \cite{Ol}).}. Likewise,
the free semigroup on two generators is non-amenable. Every finite
group is amenable and the unique invariant mean is given by the
normalized counting measure, but on countably infinite groups there
are no invariant countably additive measures, and so an invariant
mean can only be finitely additive.

From our point of view it is important in the semigroup game to have
the notion of choosing an element ``uniformly'' from $S$. In the
case of a finite group that means the uniform probability measure (which is the unique
invariant mean on a finite group).
In the general setting of semigroups, it is
reasonable to consider an invariant mean as the generalization of
uniform choice. For instance, if we consider the integers with
addition it is intuitively appealing that the probability of
choosing an even number should be the same as the probability of
choosing an odd number and that this probability should be $1/2$.
Indeed any invariant mean on $\mathbb{Z}$ does assign probability
$1/2$ to the even integers and probability 1/2 to the odd
integers.\footnote{The even integers are a 2-\emph{tile}. Given a
semigroup $S$ and a positive integer $k$, possibly infinite. A
subset $W\subseteq S$ is called $k$-tile if there exist
$s_1,...s_k\in S$ such that $S=\bigcup s_iW$ and $s_iW\cap
S_jW=\emptyset$ for $i \neq j$. It is clear that any invariant mean
takes value $\frac{1}{k}$ on a $k$-tile.}

Therefore, in setting up the semigroup game on an amenable semigroup $S$ we fix a particular invariant mean.
\begin{defin}
A \textbf{loading} on $(S,\cdot)$ is given by a finitely additive
probability measure on $S$ which is invariant with respect
to $\cdot$. A loading is denoted by $\ell$.
\end{defin}

We construct the set of allowed strategies following two natural
requirements:
\begin{enumerate}
\item The symmetry and simultaneity of the game suggest that the two players have to be interchangeable, which means that allowed strategies
$p$ and $q$ have to commute in the following sense
$$
\int_y\int_xf(xy)dp(x)dq(y)=\int_x\int_yf(xy)dq(y)dp(x)
$$
This number will be now denoted by $\pi(p,q)$.
Note that if $S$ is commutative\footnote{more generally, if $f(xy)=f(yx)$.}, then this condition is the same as
$\pi(p,q)=\pi(q,p)$.
\item There is no reason to preclude \emph{a priori} a strategy that commutes with all other allowed strategies.
\end{enumerate}

A set of strategies is called \emph{commuting} if the first of the previous two conditions holds.

\begin{defin}
Let $(S,\cdot)$ be loaded with $\ell$, let $f:S\to[-1,1]$, and let $\mathcal A_{\ell}$ be a maximal commuting set of
strategies containing $\ell$. We denote by $\mathcal G(S,f,\mathcal
A_\ell)$ the semigroup game $\mathcal G(S,f)$, when the set of
\emph{allowed strategies} is the set $\mathcal A_\ell$.

Here is a simple lemma, showing that such games exist and admit lots
of strategies.

\begin{lem}\label{lm:allowedstrategies}
For any $f:S\rightarrow[-1,1]$ and for any loading $\ell$, there exists at least one set of allowed strategies $\mathcal A_\ell$
and it contains all the countably additive measures on $S$.
\end{lem}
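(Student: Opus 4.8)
The plan is to build $\mathcal{A}_\ell$ by Zorn's lemma, after observing that $\ell$ together with all countably additive measures already forms a commuting set. Write $\mathcal{C}_0$ for the set of strategies consisting of $\ell$ and of every countably additive probability measure on $S$. First I would check that $\mathcal{C}_0$ is commuting. For two countably additive $p,q$ this is exactly Fubini's theorem, since both iterated integrals of $f(xy)$ equal the integral of $f(xy)$ against the product measure $p\times q$ on $S\times S$.

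The substantive point is that $\ell$ commutes with each countably additive $q$, and with itself, and here invariance of $\ell$ is what does the work. For the pair $(\ell,q)$ with $q$ countably additive, one iterated integral is $\int_S\big(\int_S f(xy)\,d\ell(x)\big)\,dq(y)$, whose inner integral, viewed as a function of $y$, equals $\ell(f\circ R_y)=\ell(f)$ by right-invariance — a constant — so the whole expression is $\ell(f)$. The other iterated integral is $\int_S\big(\int_S f(xy)\,dq(y)\big)\,d\ell(x)$; since $S$ is countable, $q$ is atomic and the inner integral, as a function of $x$, is the series $\sum_{t\in S}q(\{t\})\,(f\circ R_t)$, which converges in the supremum norm of $L^\infty(S)$ because $\sum_t q(\{t\})=1$ and $\|f\circ R_t\|_\infty\le1$. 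Applying the norm-continuous linear functional $\ell$ term by term and using right-invariance gives $\sum_t q(\{t\})\,\ell(f\circ R_t)=\sum_t q(\{t\})\,\ell(f)=\ell(f)$, so the two orders agree. The pair $(q,\ell)$ goes the same way using left-invariance, and for $(\ell,\ell)$ both inner integrals, $\ell(f\circ R_y)$ and $\ell(f\circ L_x)$, are already the constant $\ell(f)$. Thus $\mathcal{C}_0$ is a commuting set containing $\ell$.

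Next I would apply Zorn's lemma to the family $\mathcal{P}$ of all commuting sets of strategies that contain $\mathcal{C}_0$, ordered by inclusion. It is nonempty, since $\mathcal{C}_0\in\mathcal{P}$, and the union of a chain in $\mathcal{P}$ is again commuting — any two of its elements lie in a common member of the chain, hence commute — and still contains $\mathcal{C}_0$, so every chain has an upper bound in $\mathcal{P}$. Let $\mathcal{A}_\ell$ be a maximal element of $\mathcal{P}$. If a commuting set $\mathcal{B}$ properly contained $\mathcal{A}_\ell$, then $\mathcal{B}\supseteq\mathcal{C}_0$, so $\mathcal{B}\in\mathcal{P}$, contradicting maximality; hence $\mathcal{A}_\ell$ is a maximal commuting set of strategies, in particular a maximal commuting set containing $\ell$, so it is a legitimate set of allowed strategies, and by construction it contains all countably additive measures on $S$.

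I expect the middle step — that $\ell$ commutes with the countably additive strategies — to be the only real obstacle. Everything there hinges on moving the mean $\ell$ past the $q$-integral, which is exactly why the countability of $S$ (so that $q$ is a sup-norm-convergent atomic combination of point masses), the boundedness of $f$, and the norm-continuity of $\ell$ all get used, and why the \emph{invariance} of $\ell$, not merely its being a mean, is indispensable — the loss of symmetry in Wald's game discussed above is precisely the warning of what goes wrong without it. The Zorn argument itself is routine.
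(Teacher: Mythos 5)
Your argument is correct and would prove the lemma as literally stated, but it is organized differently from the paper's proof and ends up establishing something slightly weaker. The paper applies Zorn's lemma to the commuting sets containing only $\{\ell\}$, and then obtains the second assertion from maximality together with a more general fact: \emph{every} countably additive measure commutes with \emph{every} finitely additive measure. (The computation is the same mechanism you use --- expand the countably additive measure atomically, note that $\sum_x p_x g_x$ converges in sup norm, and apply the norm-continuous functional to the sum --- but with an arbitrary mean in place of $\ell$.) Consequently, in the paper \emph{every} maximal commuting set containing $\ell$ automatically absorbs all countably additive measures. You instead verify directly that $\mathcal C_0=\{\ell\}\cup\{\text{countably additive measures}\}$ is commuting and run Zorn's lemma above $\mathcal C_0$, which yields only that \emph{some} $\mathcal A_\ell$ contains them. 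That matches the wording of the lemma, but the stronger form is what Proposition \ref{prop:classicalvsextended} actually needs: it is stated ``for any strategy set $\mathcal A_\ell$'' and invokes the equality $\int\int f(xy)\,d\sigma(y)d\nu(x)=\int\int f(xy)\,d\nu(x)d\sigma(y)$ for an \emph{arbitrary} finitely additive $\nu$, not just for those in one specially constructed $\mathcal A_\ell$. Relatedly, your closing claim that the invariance of $\ell$ is ``indispensable'' in the middle step is not right: commutation of $\ell$ with a countably additive $q$ holds for any mean whatsoever, by the paper's argument; what invariance buys is that both iterated integrals equal the constant $\ell(f)$, not that they agree. None of this breaks your proof of the stated lemma --- it is a difference in strength and in what the argument can be reused for.
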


\begin{proof}
Let $\mathcal F_\ell$ be the family of commuting sets containing
$\ell$ partially ordered by inclusion. First of all this family is
not empty, containing the singleton $\{\ell\}$. Let
$\{C_i\}\subseteq\mathcal F_\ell$ be a chain, the union $\bigcup
C_i$ is easily proved to belong to $\mathcal F_\ell$. It follows
that $\mathcal F_\ell$ is an inductive set and we can apply Zorn's
lemma, getting at least one maximal commuting set $\mathcal A_\ell$
containing $\ell$. To get the second statement, by maximality of $\mathcal A_\ell$, it suffices to show that countably additive measures
commute with all finitely additive measures. Let $p$ be countably
additive and $q$ finitely additive. For a function $f(x,y)$, define
the family of functions $g_x$ by $g_x(y)=f(x,y)$. Now, if we identify $p$ and $q$ with
bounded linear functionals on $L^\infty(S)$, we have that $\int\int f(x,y)dq(y)dp(x)$ is equal to the functional $p$ applied to the function that maps $x$ to the evaluation of $q$ on the function $g_x$. We write this using the following notation:
\[ \int_x \int_y f(x,y) \, dq(y)dp(x) = p(x \mapsto q(g_x)),\]
Since $S$ is countable, $p$ is given by a
non-negative sequence $p_x$ with $\sum_x p_x=1$, and integration
with $p$ is defined by $\int_x \phi(x) \, dp(x) = \sum_x p_x
\phi(x)$. Thus,
\[ p(x \mapsto q(g_x))= \sum_x p_x\, q(g_x) .\] Now $q$ is a bounded linear functional and so the sequence $q(g_x)$ is bounded and that means that $\sum_x p_x q(g_x)$ converges. Therefore \[ \sum_x p_x q(g_x)= q\bigg(\sum_x p_x g_x\bigg), \] which is what we mean by
\[ \int_y \int_x f(x,y) \, dp(x) dq(y) .\]
Thus the order of integration can be switched.
\end{proof}

\end{defin}

We define the payoff to player 1 to be the function
\[ \pi: \Al \times \Al \ra \R : (p,q) \mapsto \int_y \int_x\chi_W(xy)\, dp(x) dq(y) .\]
As usual, we denote by $\underline v$ the lower value of the game, i.e. $\underline v=\sup_p \inf_q \pi(p,q)$, and by $\overline v$ the upper value, i.e. $\overline v= \inf_q \sup_p \pi(p,q)$.

\begin{teo}\label{main}
Let $S$ be an amenable semigroup loaded with $\ell$ and consider the
semigroup game $\mathcal G(S,f,\mathcal A_\ell)$. Then
\begin{enumerate}
\item Both players have an optimal strategy given by any
invariant mean belonging to $\mathcal A_\ell$.
\item The value of the game is $\ell(f)$.
\end{enumerate}
\end{teo}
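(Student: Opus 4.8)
The plan is to show that if $\ell$ is an invariant mean belonging to $\mathcal A_\ell$, then for every allowed strategy $q \in \mathcal A_\ell$ we have $\pi(\ell, q) = \ell(f)$, and symmetrically $\pi(p, \ell) = \ell(f)$ for every allowed $p$. This single computation does all the work: it shows $\ell$ guarantees player 1 exactly $\ell(f)$ no matter what player 2 does (so $\underline v \geq \ell(f)$), and it shows player 2 can hold player 1 to exactly $\ell(f)$ (so $\overline v \leq \ell(f)$); combined with the trivial inequality $\underline v \leq \overline v$ this forces $\underline v = \overline v = \ell(f)$, which is statement (2), and identifies $\ell$ as optimal for both players, which is statement (1).

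The core computation is the following. Fix $q \in \mathcal A_\ell$ and consider $\pi(\ell, q) = \int_y \int_x f(xy)\, d\ell(x)\, dq(y)$. For each fixed $y$, the inner integral is $\int_x f(xy)\, d\ell(x) = \ell\big(x \mapsto f(xy)\big) = \ell(f \circ R_y)$, where $R_y(x) = xy$. By right-invariance of $\ell$, this equals $\ell(f)$, a constant independent of $y$. Hence $\pi(\ell, q) = \int_y \ell(f)\, dq(y) = \ell(f) \cdot q(1) = \ell(f)$. Symmetrically, $\pi(p, \ell) = \int_x \int_y f(xy)\, d\ell(y)\, dp(x)$; here I must be careful about which iterated integral $\pi$ denotes, but since $p, \ell \in \mathcal A_\ell$ and $\mathcal A_\ell$ is a commuting set, $\pi(p,\ell)$ can be evaluated in either order, so I evaluate it as $\int_x \big(\int_y f(xy)\, d\ell(y)\big) dp(x)$; the inner integral is $\ell(f \circ L_x) = \ell(f)$ by left-invariance, and integrating against $p$ gives $\ell(f)$ again. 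This uses both left- and right-invariance of $\ell$, which is exactly why a two-sided invariant mean is required rather than a one-sided one; associativity is used silently in that $R_y$ and $L_x$ are genuine self-maps of $S$ whose composition structure matches $(xy)$.

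The remaining point is purely formal bookkeeping with the definitions of $\underline v$ and $\overline v$: from $\pi(\ell, q) = \ell(f)$ for all $q$ we get $\inf_q \pi(\ell, q) = \ell(f)$, hence $\underline v = \sup_p \inf_q \pi(p,q) \geq \ell(f)$; from $\pi(p, \ell) = \ell(f)$ for all $p$ we get $\sup_p \pi(p, \ell) = \ell(f)$, hence $\overline v = \inf_q \sup_p \pi(p,q) \leq \ell(f)$; and since always $\underline v \leq \overline v$, all three quantities coincide with $\ell(f)$, and a strategy attaining the outer sup (resp. inf) is optimal, so $\ell$ qualifies. I do not expect a serious obstacle here; the only subtlety to handle carefully is making sure the commuting property of $\mathcal A_\ell$ is invoked correctly so that the iterated integral defining $\pi(p,\ell)$ may be taken in the convenient order, and confirming that Lemma \ref{lm:allowedstrategies} guarantees $\mathcal A_\ell$ actually contains an invariant mean — this last point is where one should note that any invariant mean commutes with every other mean (a one-line Fubini-type argument, or an appeal to the fact that invariant means lie in the center of the relevant convolution structure), so by maximality of $\mathcal A_\ell$ it must already be a member, making statement (1) non-vacuous.
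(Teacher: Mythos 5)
Your proof is correct and follows essentially the same route as the paper's: right-invariance of $\ell$ gives $\pi(\ell,q)=\ell(f)$ for every $q$, the commuting property of $\mathcal A_\ell$ plus left-invariance gives $\pi(p,\ell)=\ell(f)$ for every $p$, and the minimax bookkeeping is identical. (Your closing worry is unnecessary: $\ell$ itself is an invariant mean lying in $\mathcal A_\ell$ by the definition of a loading, so statement (1) is non-vacuous without the unproved claim that every invariant mean commutes with every other mean.)
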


\begin{proof}
It is certainly true that $\underline v\leq\overline v$, and so we will show that
$\overline v \leq\underline v$. Now by considering what happens when player 1 uses
the invariant mean $\ell$ and player 2 plays an arbitrary $q \in
\mathcal A_\ell$ we see that
\begin{align*}
\underline v=\sup_p\inf_q\pi(p,q) \geq\inf_q\pi(\ell,q)
&=\inf_q\int_y\int_xf(xy)d\ell(x)dq(y)\\
&=\inf_q\int_y\int_xf(x)d\ell(x)dq(y)\\
&=\inf_q\int_y\ell(f)dq(y)\\
&=\ell(f).
\end{align*}
Likewise, when player 2 uses $\ell$ as a strategy and player 1 uses any
strategy $p$ we see that
\begin{align*}
\overline v=\inf_q\sup_p\pi(p,q) \leq\sup_p\pi(p,\ell)
&=\sup_p\int_y\int_xf(xy)dp(x)d\ell(y)\\
&=\sup_p\int_x\int_yf(xy)d\ell(y)dp(x)\\
&=\sup_p\int_x\int_yf(y)d\ell(y)dp(x)\\
&=\sup_p\int_x\ell(f)dp(x)\\
&=\ell(f).
\end{align*}
From these inequalities we see that $\overline v \leq\underline v$, as claimed, and
that $\ell$ is an optimal strategy for both players. Now letting
$m\in\mathcal A_\ell$ be an invariant mean, the same computation as
before shows that the strategy $m$ is optimal for both players.
\end{proof}

This theorem answers in the affirmative the question of the second
author about the existence of optimal strategies in the case of
countably infinite groups as long as the group is
amenable (see the end of \cite{Mo}).\footnote{We underline once again that these strategies are
not defined by $\sigma$-additive measures, but by finitely additive
measures.}

It can be very difficult to prove that a semigroup is amenable
since the invariant means are highly non-constructive objects; the proof of
their existence requires the axiom of choice. So, for instance, it is
hard to answer this easily posed question: \emph{Is the multiplicative
semigroup of natural numbers, $(\mathbb N,\cdot)$, amenable?}
without any other theoretical result. It has been proved in
\cite{Ar-Wi} that $(\mathbb N,\cdot)$ is amenable. Very recently
Vern Paulsen has found an interesting sufficient condition for an
infinite discrete semigroup to be amenable (see \cite{Pa}).

An important question comes immediately to mind. Suppose that a semigroup game has a value $v$ in the classical sense, i.e. with countably additive mixed strategies. Is it always true that for every loading $\ell$ our \emph{extended game} $\mathcal G(S,f,\mathcal A_\ell)$ still has value $v$? The answer is positive. Even more is true: every optimal strategy in the classical sense is still an optimal strategy for the extended game, as shown by the following:

\begin{prop}\label{prop:classicalvsextended}
Let $\sigma$ be an optimal strategy for the semigroup game $\mathcal G(S,f)$ in the classical sense. For any loading $\ell$ and for any strategy set $\mathcal A_\ell$, $\sigma$ is also an optimal strategy for the semigroup game $\mathcal G(S,f,\mathcal A_\ell)$.
\end{prop}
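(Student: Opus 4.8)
The plan is to upgrade the inequalities defining a classical optimal strategy --- which hold a priori only against countably additive measures --- to inequalities valid against every element of $\mathcal{A}_\ell$, by a pointwise comparison in $L^\infty(S)$. Write $v$ for the value of $\mathcal{G}(S,f)$ in the classical sense. By Lemma~\ref{lm:allowedstrategies} the set $\mathcal{A}_\ell$ contains every countably additive probability measure on $S$, so $\sigma\in\mathcal{A}_\ell$; and the proof of that lemma shows that $\sigma$, being countably additive, commutes with every finitely additive measure. Hence for all $p,q\in\mathcal{A}_\ell$ both $\pi(\sigma,q)$ and $\pi(p,\sigma)$ may be computed in either order of integration, and this is the only point at which countable additivity of $\sigma$ is used.

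First I would take player~$1$ to play $\sigma$. Set $h(y)=\int_S f(xy)\,d\sigma(x)$; this is an element of $L^\infty(S)$ with $\|h\|_\infty\le 1$, and $\pi(\sigma,q)=\int_S h\,dq=q(h)$ for every $q\in\mathcal{A}_\ell$. Evaluating against the point mass $\delta_t$, which is itself a classical strategy, gives $h(t)=\pi(\sigma,\delta_t)\ge v$ for every $t\in S$, by optimality of $\sigma$ for player~$1$ in the classical game; that is, $h\ge v$ pointwise on $S$. Since each $q\in\mathcal{A}_\ell$ is a mean, $q(h)-v=q(h-v)\ge 0$, i.e. $\pi(\sigma,q)\ge v$ for every $q\in\mathcal{A}_\ell$. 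The case in which player~$2$ plays $\sigma$ is symmetric: with $g(x)=\int_S f(xy)\,d\sigma(y)\in L^\infty(S)$ one has $\pi(p,\sigma)=p(g)$ for all $p\in\mathcal{A}_\ell$ (here using that $\sigma$ commutes with $p$), and $g(s)=\pi(\delta_s,\sigma)\le v$ for all $s\in S$, whence $g\le v$ pointwise and $p(g)\le v$ for all $p\in\mathcal{A}_\ell$.

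Putting the two bounds together with the always-valid inequality $\underline v\le\overline v$ for the extended game gives
\[
 v\le\inf_{q\in\mathcal{A}_\ell}\pi(\sigma,q)\le\underline v\le\overline v\le\sup_{p\in\mathcal{A}_\ell}\pi(p,\sigma)\le v ,
\]
so $\mathcal{G}(S,f,\mathcal{A}_\ell)$ has value $v$ and $\sigma$ realizes it for both players, which is the assertion. (If $\sigma$ is assumed optimal for only one player, say player~$1$, one still gets $\underline v\ge v$ from the first bound; and $\overline v\le v$ follows by running the same pointwise argument on an $\varepsilon$-optimal classical strategy for player~$2$, which exists because the classical value is $v$. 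Hence the extended value is again $v$ and $\sigma$ is still optimal for player~$1$.) The only step that needs care --- and the closest thing to an obstacle --- is checking that $\pi(\sigma,\cdot)$ and $\pi(\cdot,\sigma)$ are insensitive to the order of integration on $\mathcal{A}_\ell$, which is exactly what the countable additivity of $\sigma$ together with Lemma~\ref{lm:allowedstrategies} provides, and noting that the reduction to point masses is legitimate precisely because point masses are countably additive, hence legal strategies in the classical game.
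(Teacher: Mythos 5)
Your proof is correct, but it takes a genuinely different route from the paper's. The paper argues by contradiction using a nontrivial imported fact: countably additive probability measures are weak*-dense in the finitely additive ones, so a profitable finitely additive deviation $\nu$ against $\sigma$ could be approximated by countably additive $\nu_\alpha$ (the payoff against a fixed countably additive $\sigma$ being weak*-continuous, since $x\mapsto\int f(xy)\,d\sigma(y)$ lies in $L^\infty(S)$), contradicting classical optimality. You instead test $\sigma$ against point masses --- which are legitimate classical strategies --- to obtain the pointwise bounds $h\geq v$ and $g\leq v$ in $L^\infty(S)$, and then invoke only the positivity and normalization of means to push these bounds through arbitrary $q\in\mathcal{A}_\ell$; the one place countable additivity of $\sigma$ enters is the interchange of integration order, exactly as supplied by Lemma \ref{lm:allowedstrategies}. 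Your argument is more elementary and self-contained (no density theorem), yields the clean chain $v\leq\underline{v}\leq\overline{v}\leq v$ rather than a contradiction, and as your parenthetical notes it even handles the case where $\sigma$ is optimal for only one player. What the paper's approach buys in exchange is brevity and a form that transfers verbatim to any setting where the payoff functional is weak*-continuous in one variable --- which is why the authors can remark that Proposition \ref{prop:classicalvsextended} holds in the uncountable case as soon as Lemma \ref{lm:allowedstrategies} does; your pointwise evaluation at $\delta_t$ also survives in that generality, though one would then need $h$ and $g$ to be measurable/essentially bounded representatives in the relevant $L^\infty$.
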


\begin{proof}
The proof uses the fact that countably additive probability measures are dense in the set of all finitely additive strategies with respect to the weak* topology. This is a classical result, whose proof can be found, for instance, in the survey paper \cite[Theorem 4.3]{Me}. Hence, if $\sigma$ is a countably additive probability measure that is optimal in classical sense but not optimal for our extend game, then there exists a finitely additive probability measure $\nu$ such that
$$
\int\int f(xy)d\sigma(y)d\nu(x)=\int\int f(xy)d\nu(x)d\sigma(y)>\int\int f(xy)d\sigma(x)d\sigma(y)
$$
where the first equality follows from Lemma \ref{lm:allowedstrategies}.
Now let $\nu_\alpha$ be a net of countably additive probability measures converging to $\nu$; it follows that for some $\alpha$, one has
$$
\int\int f(xy)d\nu_\alpha(x)d\sigma(y)=\int\int f(xy)d\sigma(y)d\nu_\alpha(x)>\int\int f(xy)d\sigma(x)d\sigma(y)
$$
which contradicts the optimality of $\sigma$ for the classical game.
\end{proof}

We conclude this section with a couple of remarks concerning possible generalizations of our main result.

\begin{rem}
It is possible to prove the analogous result for some uncountable objects. Let $G$ be a locally compact group, which is not compact. Because the Haar measure
is not finite it cannot be used as an invariant mean, but we use the Haar measure to define the Banach space
$$
L^\infty(G)=\{f:G\rightarrow\mathbb R\text{ essentially bounded with
respect to the Haar measure}\}
$$
$G$ is called amenable if there exists an invariant mean on
$L^\infty(G)$ (This is more or less the original definition of J.
von Neumann). The reader can easily write down the statement and the
proof of the analogue of Theorem \ref{main}. However, for the semigroup game on an uncountable semigroup the technical difficulty mentioned earlier is that there is no natural measure to use in the definition of $L^\infty(S)$. Furthermore, it is not clear to the authors whether the second statement of Lemma \ref{lm:allowedstrategies} holds. On the other hand, if Lemma \ref{lm:allowedstrategies} holds, then Proposition \ref{prop:classicalvsextended} also holds, since the weak* density of countably additive probability measures in the set of all finitely additive probability measures is a completely general result that follows from the fact that $L^1(S)$ is weak* dense in its double dual.
\end{rem}

\begin{rem}\label{associative}
The proof of Theorem \ref{main} is independent of the associative property of the operation and so it
shows that whenever a set is equipped with an operation admitting a
finitely additive probability measure that is invariant, then the
operation games are solvable. The point is that without
associativity there are very simple games played on a finite set with a
commutative operation that do not admit invariant finitely additive probability
measures (see \cite{Ca-Da-Sc}).
\end{rem}

\section{Examples and questions}\label{examples}

Consider the additive group of the integers $(\mathbb Z,+)$.
Although it is not compact, it is an amenable group for which
Theorem \ref{main} applies. There is no way to give an explicit
formula for an invariant mean on $\mathbb Z$, but we can build one
from the intuitively appealing concept of density. Let
$A\subseteq\mathbb Z$ and define its density
$$
\mu(A)=\lim_{n\rightarrow\infty}\frac{|A\cap\{-n,-n+1,...-1,0,1,...,n-1,n\}|}{2n+1},
$$
if it exists, where $|X|$ stands for the \emph{number of elements
of $X$}. By the Hahn-Banach theorem, there are invariant means $m$ that extend $\mu$ in the sense that $m(\chi_A)=\mu(A)$ for a set $A$ having a density. Details can be found in \cite[Example 0.3]{Pat}.

Let $W$ be a winning set having a density and load the group
game on $(\Z,+)$ with one of these invariant means. Then an optimal
strategy chooses between odd and even numbers with equal
probability, chooses the last digit with equal probability, etc.
That is, the choice of a congruence class mod $k$ should be done
with probability $1/k$.

The most natural setting for the original multiplication game is the
multiplicative semigroup of positive integers $(\mathbb N,\cdot)$;
that is, the players each choose a positive integer without any
further restrictions. This semigroup is amenable and so Theorem
\ref{main} applies. Any invariant mean apparently exhibits some
strange characteristics: for any set of the form $k\N$ the measure
is 1 because it is the same as the measure of $\N$, and thus on the
complement of $k\N$ the measure is 0. This means that the even
numbers have measure 1 and the odd numbers have measure 0. A player
whose winning set is the even numbers is sure to win, which in fact
makes sense, because he or she can choose an even number to win. As
in the previous example with $(\mathbb Z,+)$, we can construct an
invariant mean which behaves like a density. Let indeed $P_n$ be the
set of natural numbers whose prime factorization contains just the
first $n$ primes, each of them with power at most $n$. Let
$A\subseteq \N$, define its multiplicative density to be

$$
\mu(A)=\lim_{n\rightarrow\infty}\frac{|A\cap P_n|}{|P_n|},
$$
if it exists. As before, there are invariant means which extend
$\mu$.

It has been already observed that in the case of compact groups there is
a unique invariant countably additive probability measure and so the
game is implicitly loaded with such a measure. Now when we consider
the larger sets of mixed strategies that are finitely additive, we
generally lose the uniqueness.\footnote{In very special cases there
are unique invariant means. See, for instance, chapter 7 of the book
\cite{dlH-Va} and references therein for a treatment of the so-called Ruziewicz
problem.} This leads us to the following question:

\begin{itemize}
\item Does there exist a semigroup game which is loadable in
infinitely many different ways?
\end{itemize}

This question is very important in our opinion, since a
positive answer would imply that loadings are really in some sense
\emph{part of the rules of the game}: some games cannot be played in
a coherent way without fixing a loading a priori.

In order to answer this question we make the following definition.

\begin{defin}
Let $S$ be a countable amenable semigroup and $f:S\rightarrow [-1,1]$ be a bounded function. We introduce the following two numbers
$$
f^{-}=\inf\{m(f) | \text{ $m$ invariant mean}\}
$$
and
$$
f^+=\sup\{m(f) | \text{ $m$ invariant mean}\}
$$
We say that $f$ has \textbf{property IM} (Intrinsic Measurability) if
$f^-=f^+$.
\end{defin}

A set is said to have the property IM if its characteristic function has the property IM. For example, any tile has the property IM\footnote{For explicit
examples consider $(\mathbb Z,+)$ for which any congruence class is a
tile.}. Moreover, the class of sets with IM is closed under the
following two operations: if $A,B$ have the property IM and they are
disjoint, then $A\cup B$ has the property IM\footnote{It is false
that $A\cup B$ is IM, when $A$ and $B$ intersect.}; if $A,B$ have the property IM and $A\subseteq
B$, then the difference $B\setminus A$ has the
property IM. It would be nice, in relation to the earlier work of
the second author \cite{Mo}, to know whether or not the set of
positive integers with first digit 1,2,3 has the property IM with
respect to the multiplication. An explicit example of sets without
IM will be given in course of proof of the
following

\begin{prop}\label{strangexamples}
The previous question has a positive answer; namely, there is a semigroup game with uncountably many different loadings.
\end{prop}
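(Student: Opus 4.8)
The plan is to exhibit an explicit countable amenable semigroup $S$ together with a winning set $W\subseteq S$ whose characteristic function fails property IM, and in fact for which the set $\{\,m(\chi_W):m\text{ an invariant mean on }S\,\}$ is a non-degenerate interval. Once this is in hand, Theorem~\ref{main} does the rest: for each invariant mean $\ell$ (that is, each loading) and each maximal commuting strategy set $\mathcal A_\ell$ containing it, the game $\mathcal G(S,\chi_W,\mathcal A_\ell)$ has value $\ell(\chi_W)$, so choosing loadings that realise different values of $\ell(\chi_W)$ produces uncountably many pairwise inequivalent loaded games.

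I would work in $S=(\Z,+)$, which is abelian, hence amenable, and already figures among our examples. Set
\[ W=\bigcup_{k\ge 2}\{\,k!,\ k!+1,\ \dots,\ k!+k-1\,\}. \]
Two elementary facts about $W$ are all that is needed: $W$ contains arbitrarily long runs of consecutive integers (the $k$-th block has length $k$), and the complement $\Z\setminus W$ also contains arbitrarily long runs (the gap $\{k!+k,\dots,(k+1)!-1\}$ between consecutive blocks has length $k(k!-1)$). In other words, $W$ has lower Banach density $0$ and upper Banach density $1$.

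The core of the argument is to turn these two facts into two invariant means with extreme values on $\chi_W$. For the value $1$, let $F_k=\{k!,\dots,k!+k-1\}$ and let $\mu_k$ be the uniform probability measure on $F_k$, viewed as a norm-one positive functional on $L^\infty(\Z)$. Since $|F_k|=k\to\infty$, for each fixed $s$ one has $|\mu_k(f\circ R_s)-\mu_k(f)|\le 2\|f\|_\infty|s|/k\to 0$, so $(\mu_k)$ is asymptotically invariant. Let $m_1$ be a weak$^*$ cluster point of the sequence $(\mu_k)$ in the unit ball of $L^\infty(\Z)^*$ (weak$^*$ compact by Banach--Alaoglu). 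Positivity and $m_1(1)=1$ pass to the limit, so $m_1$ is a mean; the asymptotic invariance passes to the limit as well (and left- and right-invariance coincide here by commutativity), so $m_1$ is an invariant mean; and $\mu_k(\chi_W)=1$ for all $k$ because $F_k\subseteq W$, hence $m_1(\chi_W)=1$. Running the same argument with the uniform measures on the gaps $G_k=\{k!+k,\dots,(k+1)!-1\}\subseteq\Z\setminus W$ yields an invariant mean $m_0$ with $m_0(\chi_W)=0$.

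Since the set of invariant means is convex and $m\mapsto m(\chi_W)$ is affine, for every $t\in[0,1]$ the measure $\ell_t=t\,m_1+(1-t)\,m_0$ is an invariant mean with $\ell_t(\chi_W)=t$; in particular $W^-=0\ne 1=W^+$, so $W$ has no property IM. By Lemma~\ref{lm:allowedstrategies} each $\ell_t$ extends to a maximal commuting strategy set $\mathcal A_{\ell_t}\ni\ell_t$, and by Theorem~\ref{main} the game $\mathcal G(\Z,\chi_W,\mathcal A_{\ell_t})$ has value $t$; these are uncountably many genuinely different loaded games. I expect the one delicate point to be the weak$^*$ limit step: a cluster point of the \emph{sequence} $(\mu_k)$ must still inherit the asymptotic invariance, which it does because every weak$^*$ neighbourhood of such a cluster point contains $\mu_k$ for arbitrarily large $k$, making the error $2\|f\|_\infty|s|/k$ as small as desired; the remaining checks are routine uses of weak$^*$ continuity of the evaluation functionals $\nu\mapsto\nu(g)$, and the combinatorics of $W$ is straightforward.
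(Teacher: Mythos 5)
Your proof is correct, and it reaches the same conclusion as the paper by a genuinely more self-contained route. The paper works on the additive semigroup of integers $\geq 2$ with $W=\{n:\lfloor\log_2\log_2 n\rfloor\text{ even}\}$, computes that the ordinary density $|W\cap[2,n]|/(n-2)$ has $\liminf$ equal to $0$ (along $n=2^{2^{2j}}-1$) and $\limsup$ equal to $1$ (along $n=2^{2^{2j+1}}$), and then \emph{cites} the standard fact that the set of values $\{m(\chi_W)\}$ over invariant means is convex and contains this $\liminf$ and $\limsup$; convexity then fills in all of $[0,1]$. You instead take $(\Z,+)$ with factorial blocks, and rather than invoking the cited result you build the two extremal invariant means by hand: uniform measures on the blocks $F_k$ (respectively on the gaps $G_k$) form an asymptotically invariant sequence because $|(F_k+s)\triangle F_k|\leq 2|s|$ while $|F_k|\to\infty$, and a weak$^*$ cluster point (which exists by Banach--Alaoglu even though the ball is not metrizable, as you correctly note) inherits positivity, normalization, invariance, and the value $1$ (respectively $0$) on $\chi_W$. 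The final step --- convex combinations $t\,m_1+(1-t)\,m_0$ realizing every value in $[0,1]$ --- is identical in both arguments. What your version buys is independence from the external reference and the slightly more robust hypothesis of upper/lower Banach density in place of ordinary density along initial segments; what the paper's version buys is brevity, at the cost of outsourcing the existence of the extremal means to the literature. One cosmetic remark: the paper's notion of loading asks for two-sided invariance, which in your abelian setting coincides with the one-sided invariance you verify, as you point out.
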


\begin{proof}
Let $S$ be the additive semigroup of integers which are greater than
or equal $2$ and $W=\{n\in\mathbb
N:\lfloor\log_2\log_2(n)\rfloor\text{ is even}\}$. We now prove that $W^-=0$ and $W^+=1$. First of all, observe that we can re-write $W$ in the following form
$$
W=\bigcup_{k=0}^\infty[2^{2^{2k}},2^{2^{2k+1}}-1]
$$
where $[a,b]$ stands, in this case, for the set of integers $x$ such that $a\leq x\leq b$. Now, take $n$ of the form $2^{2^{2j}}-1$, for some $j$, and observe that
$$
\frac{|W\cap[2,n]|}{n-2}\leq\frac{2^{2^{2(j-1)}}}{2^{2^{2j}}-3}
$$
which goes to $0$, when $j$ goes to infinity. It follows that
$$
\lim\inf_n\frac{|W\cap[2,n]|}{n-2}=0
$$
In a similar way, choosing $n$ of the form $2^{2^{2j+1}}$, one gets

$$
\lim\sup_n\frac{|W\cap[2,n]|}{n-2}=1
$$

Now,  the set of values which are taken by some invariant
mean over $(\mathbb N,+)$ is convex (since convex combinations of invariant means are still invariant means) and contains the $\lim\inf$ and the $\lim\sup$ above (this is a standard fact and a proof can be found in \cite{Be}). It
follows that for any $r\in[0,1]$ there is an invariant mean $\ell$
such that $\ell(W)=r$. These invariant means give an uncountable
family of different loadings.
\end{proof}

Another important question that comes to mind regards the possibility to associate a well defined value to games to which one could not associate a value up to now: does there exist a semigroup game with $S$ and $f$ that has no value in the classical sense, whose value depends on the order of integration when the mixed strategies are all finitely additive probability measures, but $f$ has the property IM? More formally, the question is

\begin{prob}
Does there exist an amenable semigroup $S$ and an IM function $f:S\rightarrow[-1,1]$, such that:
\begin{itemize}
\item $\sup\inf\pi(p,q)<\inf\sup\pi(p,q)$, where $p$ and $q$ range over all countably additive probability measures on $S$,
\item $\sup\inf\int\int f(xy)dp(x)dq(y)\neq\sup\inf\int\int f(xy)dq(y)dp(x)$, where $p$ and $q$ range over all finitely additive probability measures.
\end{itemize}
\end{prob}

We do not know the answer to this question. Indeed, the first idea is to construct an IM function which does not verify Fubini's property\footnote{We say that a bounded function $f:S\times S\rightarrow\mathbb R$ has Fubini's property if and only if $\int\int f(x,y)d\mu(x)d\nu(y)=\int\int f(x,y)d\nu(y)d\mu(x)$, for all finitely additive probability measures $\mu,\nu$ on $S$.} and this is easy, since the function $f:\mathbb Z\rightarrow[-1,1]$, defined by $f(x)=\chi_{2\mathbb N}(x)-\chi_{2\mathbb N+1}(x)$, already plays the role\footnote{$f$ has clearly the property IM. To see that it does not verify Fubini's property it suffices to take $\mu$ to be the trivial extension to $\mathbb Z$ of an invariant probability measure on the additive semigroup $2\mathbb N$ and $\nu$ to be the \emph{inversion} of $\mu$; i.e. $\nu(A)=\mu(-A)$, for all $A\subseteq\mathbb Z$.}. But this is not enough to exhibit an example as required, since the group game $\mathcal G(S,f)$ has value $0$ even when all finitely additive measures are allowed. It is indeed quite possible that such an example does not exist and that our extension procedure turns to be equivalent to choosing each order of integration with some probability. We believe that this latter possibility is intriguing, since this approach to solving infinite games (deciding one of the two orders of integration with some probability) was analyzed in \cite{Sc-Se}, where the authors proved in their Theorem 2.4 that, under the so-called condition A, the game has a solution in some metric completion of the set of all finitely additive strategies. So it would be nice to discover that the two approaches are actually equivalent for the semigroup game (and, in particular, that one can find a solution in the set of finitely additive measures without passing to some metric completion). In fact, the approach of Schervish and Seidenfeld does not require the axiom of choice, and so it is apparently more realizable\footnote{\emph{Apparently} means that, in fact, even when the Schervish-Seidenfeld theorem gives a solution in the set of all finitely additive strategies, in most cases, this solution is a purely finitely additive measures, and it has been recently proved by Lauwers that such measures are non-constructible objects as well (see \cite{La}).}, but from a theoretical point of view it fails to capture the essence of the problem, which, in case of the semigroup game, is the lack of uniqueness of the loading. In order to make this observation clearer, consider that when we play a game with \emph{fair} dice, we expect that the probability is $\frac16$ for each face, and this is in fact the unique loading on a set of six elements. The motivation of our research is to find the counterpart for infinite sets of this procedure, and we have used the notion widely accepted among group theorists that invariant means are the infinite analogue of uniform measures. But the lack of uniqueness of an invariant mean reflects the ambiguity of the word \emph{fair};  to fix a loading (i.e., to choose an invariant mean) is to define what fairness is.


\begin{thebibliography}{Ca-Da-Sc }

\bibitem[Ar-Wi]{Ar-Wi} A. N. Argabright and C. O. Wilde, \emph{Semigroups satisfying a strong F{\o}lner
condition}, Proc. Amer. Math. Soc. 18 (1967) 587--591.

\bibitem[Be]{Be} M. Beiglb\"{o}ck, \emph{An ultrafilter approach to Jin's theorem}, Israel J. Math. 185 (2011) 369--374.

\bibitem[Ca]{Ca} V. Capraro, \emph{Solution of Wald's game using loadings and allowed
strategies}, preprint (2011).

\bibitem[Ca-Da-Sc]{Ca-Da-Sc} V. Capraro, M. Dall'Aglio, M. Scarsini, \emph{On sum
games}, in preparation.

\bibitem[Ch]{Ch} C. Chou, \emph{On topologically invariant means on a
locally compact group}, Trans. Amer. Math. Soc. 151 (1970) 443--456.

\bibitem[Da]{Da} M. M. Day, \emph{Amenable semigroups}, Illinois J. Math. 1 (1957) 509--544.

\bibitem[He-Su]{He-Su} D. Heath and W. Sudderth, \emph{On a theorem of de Finetti, oddsmaking, and game theory}, Ann. of Math. Stat. 43 (1972) 2072--2077.

\bibitem[dlH-Va]{dlH-Va} P. de la Harpe, A. Valette,
\emph{La propri\'{e}t\'{e} (T) de Kazhdan pour les groups localement
compactes}, (avec un appendice de M. Burger) \emph{Asterisque} No. 175 (1989).

\bibitem[Ko]{Ko} A. N. Kolmogorov, \emph{Foundations of the Theory of Probability}, Second English Edition, Chelsea, New York, 1956.

\bibitem[La]{La} L. Lauwers, \emph{Purely finitely additive measures are non-constructible objects}, (April 13, 2010). Available at SSRN: http://ssrn.com/abstract=1604383.

\bibitem[Ma]{Ma} M. Marinacci, \emph{Finitely additive and epsilon Nash
equilibria}, Internat. J. Game Theory 26 (1997) 315-333.

\bibitem[Me]{Me} M. Merkle, \emph{Topics in weak convergence of probability measures}, Zb. Rad. (Beogr.) 9 (17) (2000) 235--274.

\bibitem[Mo]{Mo} K. E. Morrison, \emph{The multiplication game},
Math. Mag. 83 (2010) 100--110.

\bibitem[Ol]{Ol} A. Yu. Ol'shanskii, \emph{On the question of the existence of an invariant mean on a group}, (Russian) Uspekhi Mat. Nauk 35 (1980), no.4 (214) 199-200.

\bibitem[Pat]{Pat} A. L. T. Paterson, \emph{Amenability}, Providence: American Mathematical Society (1988).

\bibitem[Pa]{Pa} V. I. Paulsen, \emph{Syndetic sets and amenability},
preprint (2010), arXiv:1002.3157.

\bibitem[Ra]{Ra} B. Ravikumar, \emph{A simple multiplication game and its
analysis}, preprint (2007).

\bibitem[Sc-Se]{Sc-Se} M. J. Schervish and T. Seidenfeld, \emph{A fair minimax theorem for two-person (zero-sum) games involving finitely additive strategies}, in \emph{Rethinking the Foundations of Statistics}, ed: J. B. Kadane, M. J. Schervish, and T. Seidenfeld, Cambridge: Cambridge Univ. Press (1999) 267--291.

\bibitem[vN]{vN} J. von Neumann, \emph{Zur allgemeinen Theorie des
Ma{\ss}es}, Fund. Math. 13 (1929) 73-111.

\bibitem[Wa]{Wa} A. Wald, \emph{Statistical Decision Functions}, New
York: Wiley (1950).

\end{thebibliography}
\end{document}